\newcommand{\executeiffilenewer}[3]{%
\ifnum\pdfstrcmp{\pdffilemoddate{#1}}%
{\pdffilemoddate{#2}}>0%
{\immediate\write18{#3}}\fi%
}
\newcommand{%
\executeiffilenewer{.svg}{.pdf}%
{inkscape -z -D --file=.svg %
--export-pdf=.pdf --export-latex}%
\input{.pdf_tex}%
}[1]{%
\executeiffilenewer{#1.svg}{#1.pdf}%
{inkscape -z -D --file=#1.svg %
--export-pdf=#1.pdf --export-latex}%
\input{#1.pdf_tex}%
}
\theoremstyle{plain}
\newtheorem{proposition}{Proposition}
\newtheorem{lemma}{Lemma}
\newcounter{algocount}
\newcounter{examplecount}
\newenvironment{algorithm}[1][]{\refstepcounter{algocount}\setlength{\parindent}{0.5cm}\begin{trivlist}\item \textbf{Algorithm \thealgocount.}#1\\[-0.2cm]\rule{\columnwidth}{1pt}}{\\[-0.2cm]\rule{\columnwidth}{1pt}\end{trivlist}}
\newcommand{\veczero}{\boldsymbol{0}}
\newcommand{\vecd}{\boldsymbol{d}}
\newcommand{\vecc}{\boldsymbol{c}}
\newcommand{\vecp}{\boldsymbol{p}}
\newcommand{\vect}{\boldsymbol{t}}
\newcommand{\setz}{\ensuremath{\mathbf{Z}}\xspace}
\newcommand{\bpm}{\begin{pmatrix}}
\newcommand{\epm}{\end{pmatrix}}
\newcommand{\bbm}{\begin{bmatrix}}
\newcommand{\ebm}{\end{bmatrix}}
\DeclareMathOperator*{\argmin}{argmin}
\DeclareMathOperator{\expop}{\mathbb{E}}
\DeclareMathOperator{\miop}{\mathbb{I}}
\DeclareMathOperator{\kl}{\mathbb{D}}
\DeclareMathOperator*{\minimize}{minimize}
\DeclareMathOperator*{\maximize}{maximize}
\DeclareMathOperator*{\st}{subject\;to}
\title{Optimal Non-Uniform Mapping\\for Probabilistic Shaping}
\author{\IEEEauthorblockN{Georg B\"ocherer}
\IEEEauthorblockA{Institute for Communications Engineering\\Technische Universit\"at M\"unchen, Germany\\
Email: \texttt{georg.boecherer@tum.de}}
\thanks{This work was supported by the German Ministry of Education and Research in the framework of an Alexander von Humboldt Professorship.}
}
\newcommand{\snr}{\mathsf{snr}}
\begin{document}

\maketitle

\begin{abstract}
The construction of optimal non-uniform mappings for discrete input memoryless channels (DIMCs) is investigated. An efficient algorithm to find optimal mappings is proposed and the rate by which a target distribution is approached is investigated. The results are applied to non-uniform mappings for additive white Gaussian noise (AWGN) channels with finite signal constellations. The mappings found by the proposed methods outperform those obtained via a central limit theorem approach as suggested in the literature.
\end{abstract}

\section{Introduction}

The capacity of a discrete input memoryless channel (DIMC) is given by the maximum mutual information between the channel input and the channel output, where the maximum is taken over all permitted input probability mass functions (pmf). For a digital communication system to operate close to capacity, the pmf of the channel input symbols should resemble the capacity-achieving pmf. Unequal transition probabilities between input and output symbols, cost constraints, or input symbols of unequal durations can lead to non-uniform capacity-achieving input pmfs \cite{bocherer2012capacity}. Techniques to achieve non-uniform pmfs go under the name probabilistic shaping. Recently, \c{S}a\c{s}o\u{g}lu \emph{et al} \cite{sasoglu2009polarization} constructed polar codes that achieve the \emph{symmetric} capacity for arbitrary DIMCs, i.e., the maximum rate for \emph{uniform} input pmfs. This raises the question: can these codes achieve the true capacity?

One possibility to address this problem is by wrapping the channel by a super-channel that permits a uniform input. Gallager proposed in \cite[p. 208]{Gallager1968} to use a non-uniform mapping from $M$ symbols to the channel input alphabet to realize such a super-channel. An example of such a mapping is displayed in Fig.~\ref{fig:num}. This mapping transforms a uniform distribution over $M=4$ symbols into the non-uniform pmf $d_1=1/4$, $d_2=3/4$, $d_3=0$. This non-uniform mapping approach is briefly discussed in \cite[Sec. III.D]{sasoglu2009polarization}. However, if the mapping requires a very large $M$, then it may not be practical since coding must be done over the $M$ symbols and therefore the coding complexity increases with $M$, see \cite{abbe2011polar}. This observation motivates looking for efficient non-uniform mappings. 

For a uniform distribution over $M$ symbols, each mapping generates an $M$-type pmf, i.e., a pmf where each symbol probability can be written as $c/M$ for some non-negative integer $c$. Conversely, for each $M$-type pmf $\vecd$ there is a mapping that generates it. Note that the mapping is in general many-to-one and not necessarily onto. The mapping in Fig.~\ref{fig:num} is an example. We focus on the construction of $M$-type pmfs; the corresponding mapping is easily obtained.
\begin{figure}
\centering
\footnotesize
\def\svgwidth{0.5\columnwidth}
\executeiffilenewer{figures/num.svg}{figures/num.pdf}%
{inkscape -z -D --file=figures/num.svg %
--export-pdf=figures/num.pdf --export-latex}%
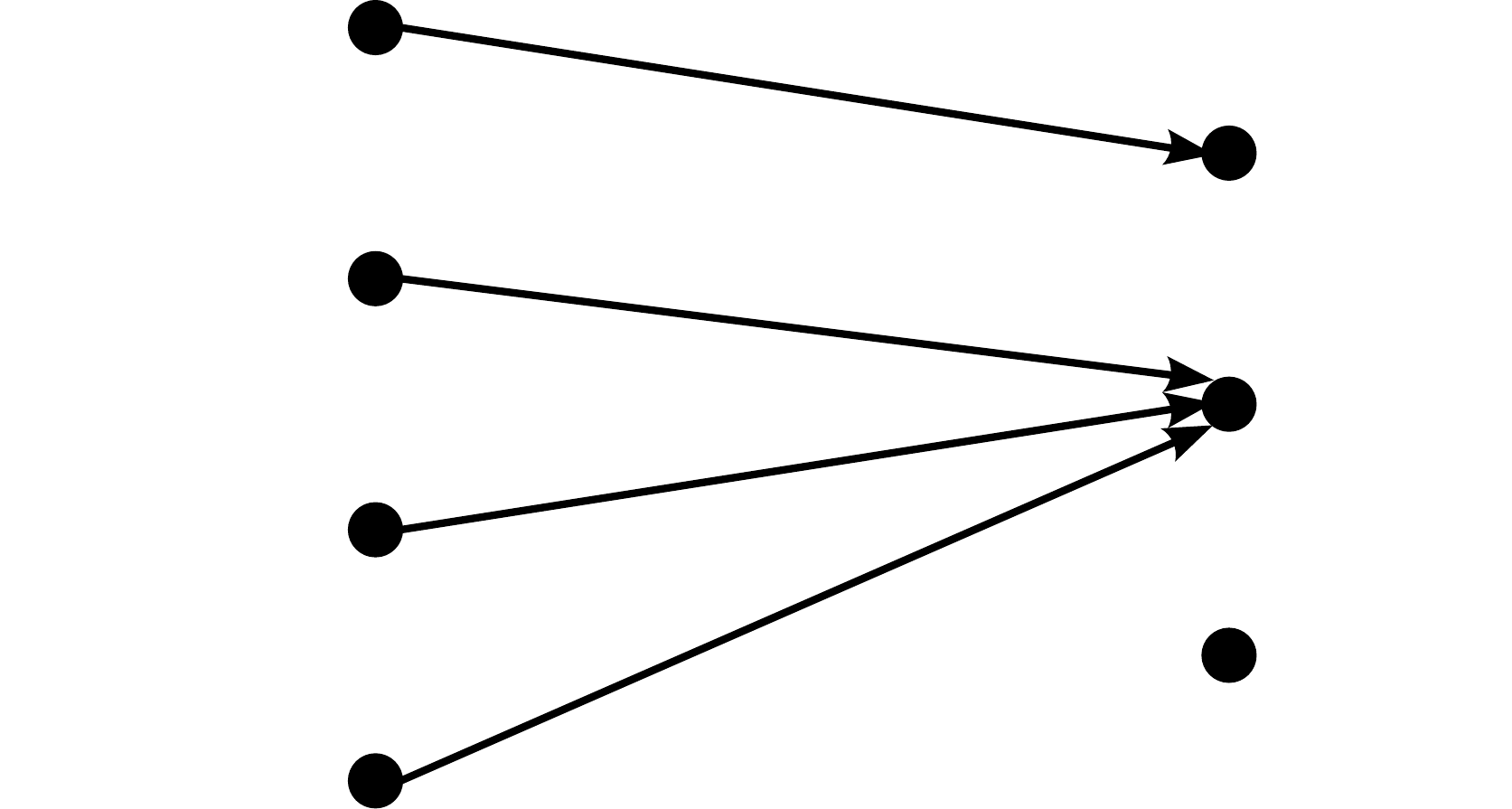%

\caption{The displayed mapping transforms a uniform distribution over $4$ symbols into the non-uniform pmf $d_1=1/4$, $d_2=3/4$, $d_3=0$.}
\label{fig:num}
\vspace{-0.4cm}
\end{figure}

We ask the following two questions:
\begin{itemize}
\item[Q1] When we increase $M$, how fast can an $M$-type pmf converge to the target pmf?
\item[Q2] For a finite $M$, how can we find the $M$-type pmf that ``optimally'' approximates the target pmf?
\end{itemize}
In \cite[Sec. IV.B]{abbe2011polar}, Abbe and Barron consider question Q1 for the \emph{additive white Gaussian noise} (AWGN) channel. For $M=2^m$, they suggest to use the binomial coefficients divided by $M$ as probabilities for an $(m+1)$-PAM constellation. They call their method the \emph{central limit theorem} (CLT) approach and they show that the gap to the AWGN capacity $0.5\log(1+\mathsf{snr})$ scales as $1/\log(M)$. Schreckenbach proposed in \cite{Schreckenbach2007} a greedy algorithm to construct an $M$-type pmf based on a target pmf. However, the author does not address questions Q1 and Q2.  

In this work, we use the relative entropy $\kl(\vecd\Vert\vect)$ as a measure for how good $\vecd$ approximates the target pmf $\vect$. Our motivation is that this measure is an upper bound for the loss of mutual information when a pmf $\vecd$ different from the capacity achieving pmf $\vect$ is used \cite[Sec.~3.4.3]{bocherer2012capacity}. Regarding question Q1, we show that the relative entropy has an upper bound proportional to $1/M$. For question Q2, we propose an efficient algorithm that finds the $M$-type pmf that minimizes $\kl(\vecd\Vert\vect)$. The complexity of our algorithm is $\mathcal{O}(Mn)$ where $n$ is the number of entries of the target pmf $\vect$. 

This paper is organized as follows. In Sec.~\ref{sec:quantization}, we state the problem. We derive a convergence rate bound in Sec.~\ref{sec:convergence}. Sec.~\ref{sec:optimal} gives an algorithm to find optimal $M$-type approximations. In Sec.~\ref{sec:awgn} and Sec.~\ref{sec:numerical}, we apply our methods to the AWGN channel and provide numerical results. The mappings found by the proposed methods outperform those obtained via the CLT approach as suggested in \cite[Sec. IV.B]{abbe2011polar}.

\section{Problem Statement}
\label{sec:quantization}
\begin{figure}
\centering
\footnotesize
\def\svgwidth{1.0\columnwidth}
\executeiffilenewer{figures/quantization.svg}{figures/quantization.pdf}%
{inkscape -z -D --file=figures/quantization.svg %
--export-pdf=figures/quantization.pdf --export-latex}%
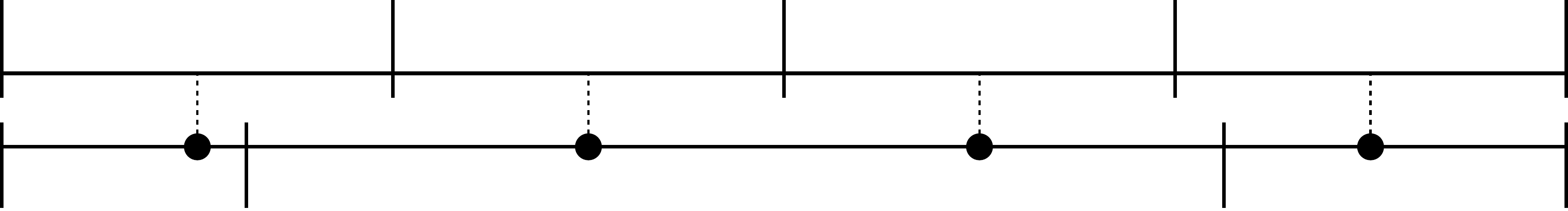%

\caption{Example for quantization as defined in \eqref{eq:quantization}. First, the left-open interval $(0,1]$ is partitioned into $M=4$ left-open uniform intervals of length $1/M$. Second, the interval $(0,1]$ is partitioned into left-open intervals whose lengths are the probabilities of the target pmf. Finally, the approximation of $t_i$ is determined by the number of uniform intervals whose middle points lie within the interval that corresponds to $t_i$. Thus, the quantization of $\vect^T=(0.16,\;0.62,\;0.22)^T$ is $\vecd^T=\frac{1}{4}(1,\;2,\;1)^T$.
}
\label{fig:quantization}
\end{figure}
\subsection{Quantization}
The \emph{cumulative distribution function} (cdf) $\boldsymbol{T}$ for the target pmf $\vect$ is defined  by
\begin{align}
T_i = \sum_{k=1}^i t_k,\quad i=1,\dotsc,n.
\end{align}
The $i$th entry of the $M$-type approximation by quantizing $\vect$ is given by
\begin{align}
d_i = \frac{1}{M}\cdot\left|\left\{\ell\in\setz\colon T_{i-1}<\frac{\ell-\frac{1}{2}}{M}\leq T_i\right\}\right|\label{eq:quantization}
\end{align}
where $\setz$ denotes the set of integers and $|\cdot|$ the cardinality of a set. We define $T_0=0$. An illustrating example is displayed in Fig.~\ref{fig:quantization}. Note that if $t_i=0$, then $T_{i-1}=T_i$, which implies that the set on the right hand side of \eqref{eq:quantization} is empty. Consequently, we have
\begin{align}
t_i=0\quad\Rightarrow\quad d_i = 0.\label{eq:probZero}
\end{align}
We make use of \eqref{eq:probZero} later. For each $i$, $d_i$ is bounded by
\begin{align}
t_i-\frac{1}{M}\leq d_i\leq t_i+\frac{1}{M}\label{eq:quantBounds}
\end{align}
which implies
\begin{align}
|t_i-d_i|\leq\frac{1}{M}.
\end{align}
This observation immediately gives the following proposition.
\begin{proposition}\label{prop:general}
Let $f$ be a continuous function from the set of pmfs with $n$ entries to the set of real numbers. Then a target pmf $\vect$ can be approximated arbitrarily well by an $M$-type pmf in the sense that for any $\epsilon>0$, there is an $M_0$, such that for all $M>M_0$ we have $|f(\vect)-f(\vecd_M)|<\epsilon$ where $\vecd_M$ is the $M$-type pmf found by quantizing $\vect$ according to \eqref{eq:quantization}.
\end{proposition}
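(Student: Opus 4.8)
The plan is to treat this as a straightforward continuity argument built on the entrywise bound $|t_i - d_i| \leq 1/M$ already established just after \eqref{eq:quantBounds}. The essential observation is that this bound forces the quantized pmf $\vecd_M$ to converge to $\vect$ coordinatewise as $M\to\infty$; since both objects live in the finite-dimensional probability simplex, coordinatewise convergence is equivalent to convergence in any norm. Continuity of $f$ at the fixed point $\vect$ then transfers this convergence to $f(\vecd_M)\to f(\vect)$, which is exactly the claim.

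First I would verify that $\vecd_M$ is a genuine pmf, so that $f(\vecd_M)$ is even defined: each $d_i$ is $1/M$ times a cardinality and hence non-negative, and the midpoints $(\ell-\tfrac12)/M$ lying in $(0,1]$ are precisely those with $\ell\in\{1,\dotsc,M\}$, so across all $i$ the sets in \eqref{eq:quantization} partition these $M$ points and yield $\sum_i d_i = 1$. Next I would convert the entrywise bound into a single norm bound, for instance $\|\vect-\vecd_M\|\leq \sqrt{n}/M$ in the Euclidean norm (or $\leq 1/M$ in the sup norm). Finally, fixing $\epsilon>0$, continuity of $f$ at $\vect$ supplies a $\delta>0$ with $|f(\vecd)-f(\vect)|<\epsilon$ whenever $\|\vecd-\vect\|<\delta$; choosing $M_0>\sqrt{n}/\delta$ then guarantees $\|\vect-\vecd_M\|<\delta$, and hence $|f(\vect)-f(\vecd_M)|<\epsilon$, for all $M>M_0$.

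I do not expect a genuine obstacle here, since the hard analytic content, namely the $1/M$ approximation guarantee, is already in hand from \eqref{eq:quantBounds}. The only point demanding mild care is the bookkeeping that keeps $\vecd_M$ inside the domain of $f$ (the set of $n$-entry pmfs), i.e., checking that the quantization conserves total mass; after that the result follows immediately from the definition of continuity together with the equivalence of norms on $\mathbb{R}^n$. It is worth noting that continuity of $f$ at the single point $\vect$ suffices, so no appeal to uniform continuity or to compactness of the simplex is required.
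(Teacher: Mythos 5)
Your proof is correct and follows the same route the paper intends: the paper simply asserts that the entrywise bound $|t_i-d_i|\leq 1/M$ from \eqref{eq:quantBounds} ``immediately gives'' the proposition, and you have filled in exactly that continuity argument (plus the worthwhile sanity check that the quantization conserves total mass so $\vecd_M$ stays in the domain of $f$). No gap; nothing further needed.
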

Prop.~\ref{prop:general} applies to any continuous function defined on the probability simplex. In particular, it applies to information measures such as entropy and mutual information, which are continuous functions of the channel input pmf.
\subsection{Minimizing Relative Entropy}
Prop.~\ref{prop:general} is a qualitative result. It tells us that we can approximate a target pmf as close as desired, but it does not give the speed of convergence when $M$ increases, nor how to optimally quantize for a finite $M$. To get such results, we must specify a measure of approximation. One useful measure is the gap to capacity that results from using an $M$-type pmf instead of the capacity-achieving pmf. In \cite[Sec. IV.B]{abbe2011polar}, the authors derived a bound on this gap for AWGN channels when using $M$-type pmfs. However, the derivation depends heavily on having Gaussian noise. Getting similar results for general DIMCs seems difficult.

The relative entropy $\kl(\vecd\Vert\vect)$ of the channel input pmf $\vecd$ and the capacity-achieving pmf $\vect$ is an upper-bound on the gap to capacity that results from using $\vecd$ \cite[Sec. 3.4.3]{bocherer2012capacity}. Relative entropy is simpler to analyze than the exact gap to capacity since the (possibly complicated) structure of the channel enters only via the capacity-achieving pmf. We will therefore address question Q1 (rate of convergence) and question Q2 (optimal $M$-type pmf) with respect to $\kl(\vecd\Vert\vect)$.
\section{Convergence Rate}
\label{sec:convergence}
The relative entropy achieved by the $M$-type pmf $\vecd$ obtained by quantizing $\vect$  according to \eqref{eq:quantization} is bounded as
\begin{align}
\kl(\vecd\Vert\vect)&=\sum_{i\colon d_i>0} d_i\log\frac{d_i}{t_i}\\
&\overset{\text{(a)}}{\leq} \sum_{i\colon d_i>0} d_i\log\frac{t_i+\frac{1}{M}}{t_i}\\
&= \sum_{i\colon d_i>0} d_i\log\Bigl(1+\frac{1}{Mt_i}\Bigr)\\
&\overset{\text{(b)}}{\leq} \sum_{i\colon d_i>0} d_i\frac{1}{Mt_i}\\
&\overset{\text{(c)}}{\leq} \sum_{i\colon d_i>0} d_i\frac{1}{\displaystyle M\min_{j\colon t_j>0}t_j}\\
&= \frac{1}{\displaystyle \min_{j\colon t_j>0}t_j}\cdot\frac{1}{M}.
\end{align}
where (a) follows by \eqref{eq:quantBounds}, (b) follows by \text{$\log(1+x)\leq x$}, and where (c) follows by \eqref{eq:probZero}. Thus we have the following result.
\begin{proposition}\label{prop:rate}
For each target pmf $\vect$ there exists a constant $T>0$ such that
\begin{align}
\kl(\vecd_M\Vert\vect)\leq T/M,\quad\forall M\geq 1
\end{align}
where $\vecd_M$ is the $M$-type pmf obtained by quantizing $\vect$ according to \eqref{eq:quantization}.
\end{proposition}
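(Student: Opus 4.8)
The plan is to bound the relative entropy directly from its definition, exploiting the per-coordinate quantization bound \eqref{eq:quantBounds} together with the support condition \eqref{eq:probZero}. I would begin by writing $\kl(\vecd_M\Vert\vect)=\sum_{i\colon d_i>0} d_i\log(d_i/t_i)$, restricting the sum to indices with $d_i>0$ since the remaining terms vanish. The first key step is to replace $d_i$ in the numerator of the logarithm by its upper bound $t_i+1/M$ from \eqref{eq:quantBounds}; because the logarithm is increasing, this can only increase each summand, giving an upper bound of the form $\sum_{i\colon d_i>0} d_i\log\bigl(1+1/(Mt_i)\bigr)$.

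Next I would linearize using the elementary inequality $\log(1+x)\leq x$, which turns the bound into $\sum_{i\colon d_i>0} d_i/(Mt_i)$. The decisive observation is that every index appearing in this sum has $d_i>0$, and \eqref{eq:probZero} guarantees that $d_i>0$ forces $t_i>0$; hence each $t_i$ in the denominator is strictly positive and can be lower-bounded by $\min_{j\colon t_j>0}t_j$. Pulling this minimum out of the sum and using the normalization $\sum_i d_i=1$ then yields $\kl(\vecd_M\Vert\vect)\leq \bigl(1/\min_{j\colon t_j>0}t_j\bigr)\cdot(1/M)$, so the claimed constant is $T=1/\min_{j\colon t_j>0}t_j$, which depends only on $\vect$ and is therefore valid uniformly in $M$.

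I expect the main obstacle to be the well-definedness of the relative entropy rather than any calculation: the bound is vacuous unless the support of $\vecd_M$ is contained in that of $\vect$, since otherwise some term $d_i\log(d_i/t_i)$ with $t_i=0$ would be infinite. This absolute-continuity condition is precisely what \eqref{eq:probZero} supplies, so the crux of the argument is recognizing that the quantization rule \eqref{eq:quantization} never assigns mass to a coordinate where the target is zero, and moreover that this same fact licenses lower-bounding the denominator by the \emph{minimum positive} target probability rather than by a generic minimum. Once that structural point is in hand, the two inequalities (monotonicity of $\log$ and $\log(1+x)\leq x$) together with $\sum_i d_i=1$ are routine.
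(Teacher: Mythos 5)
Your proposal is correct and follows essentially the same route as the paper's own derivation: the same chain of inequalities (monotonicity of $\log$ with \eqref{eq:quantBounds}, then $\log(1+x)\leq x$, then lower-bounding $t_i$ by $\min_{j\colon t_j>0}t_j$ via \eqref{eq:probZero}), arriving at the same constant $T=1/\min_{j\colon t_j>0}t_j$. Your added remark on absolute continuity makes explicit what the paper leaves implicit, but the argument is otherwise identical.
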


\section{Optimal $M$-type pmf}
\label{sec:optimal}
Consider a target pmf $\vect$ with $n$ entries and a number $M$. We wish to solve the optimization problem
\begin{align}
\begin{split}
\minimize_{\vecd}\quad&\kl(\vecd\Vert\vect)\\
\st\quad&\vecd\text{ is $M$-type}.
\end{split}\label{prob:orig}
\end{align}
\subsection{Equivalent Problem}
Recall that each entry $d_i$ of an $M$-type pmf can be written as $d_i=c_i/M$ for some non-negative integer $c_i$. We write the objective function of problem \eqref{prob:orig} as
\begin{align}
\kl(\vecd\Vert\vect)&=\sum_{i\colon c_i>0} \frac{c_i}{M}\log\frac{\frac{c_i}{M}}{t_i}\\
&=\frac{1}{M}\Bigl(\sum_{i\colon c_i>0} c_i\log\frac{c_i}{t_i}\Bigr)-\log M.
\end{align}
We conclude that Problem~\eqref{prob:orig} is equivalent to
\begin{align}
\begin{split}
\minimize_{c_1,\dotsc,c_n}\quad&\sum_{i\colon c_i>0} c_i\log\frac{c_i}{t_i}\\
\st\quad&c_i\in\{0,1,2,\dotsc,M\},\quad i=1,\dotsc,n\\
&\sum_{i=1}^n c_i = M.
\end{split}\label{prob:mod}
\end{align}
If $\vecc^*$ is a solution of Problem~\eqref{prob:mod}, then $\vecd^*=\vecc^*\cdot 1/M$ is a solution of Problem \eqref{prob:orig}. We call a vector $\vecc$ that fulfills the constraints of problem \eqref{prob:mod} an \emph{allocation}. 

\subsection{Algorithm}
To solve problem~\eqref{prob:mod}, we write the objective function as a telescoping sum
\begin{align}
\sum_{i\colon c_i>0} c_i\log\frac{c_i}{t_i}&=\sum_{i=1}^{n}\sum_{k=1}^{c_i}\Bigl[\underbrace{k\log\frac{k}{t_i}-(k-1)\log\frac{k-1}{t_i}}_{=:\Delta_i(k)}\Bigr]\\
&=\sum_{i=1}^{n}\sum_{k=1}^{c_i}\Delta_i(k).
\end{align}
An allocation $\vecc$ can be obtained by initially assigning the all zero vector $\veczero$ to $\vecc$ and then successively incrementing the entries of $\vecc$. After $M$ iterations, the constraint $\sum_ic_i=M$ is fulfilled and $\vecc$ is a valid allocation. If in some iteration, the $j$th entry is incremented by $1$, then the corresponding increment of the objective function is \text{$\Delta_j(c_j+1)$}. The following algorithm finds an allocation in a greedy manner. In each iteration, it increases by $1$ the entry $i$ with the smallest increment $\Delta_i(c_i+1)$.
\begin{algorithm}\label{alg:opt}\ 
\\
Initialize $c_i\leftarrow 0$, $i=1,\dotsc,n$.\\
\textbf{repeat} $M$ times\\
\indent Choose $\displaystyle j=\argmin_i \Delta_i(c_i+1)$.\\
\indent Update $c_j\leftarrow c_j+1$.\\
\textbf{end repeat}\\
Return $\vecc$.
\end{algorithm}
We next state the optimality of Algorithm~\ref{alg:opt}.
\begin{proposition}\label{prop:opt}
For a specified target pmf $\vect$ and a positive integer number $M$, the allocation $\vecc$ found by Algorithm~\ref{alg:opt} is a solution of Problem~\eqref{prob:mod}.
\end{proposition}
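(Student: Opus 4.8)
The plan is to exploit the telescoping decomposition already supplied in the excerpt and to show that Algorithm~\ref{alg:opt} selects, among all the ``increments'' $\Delta_i(k)$, exactly the $M$ smallest ones, which is clearly the least that any feasible allocation can cost. First I would establish the crucial structural property that for each $i$ with $t_i>0$ the sequence $k\mapsto\Delta_i(k)$ is non-decreasing (in fact strictly increasing). Since $\Delta_i(k)=g_i(k)-g_i(k-1)$ with $g_i(x)=x\log(x/t_i)$, and $g_i$ is convex on $[0,\infty)$ because $g_i''(x)=1/x>0$, its first differences increase with $k$. Bins with $t_i=0$ are forced to have $c_i=0$ for the objective to be finite, and Algorithm~\ref{alg:opt} never touches them since $\Delta_i(1)=-\log t_i\to+\infty$; so I may assume $t_i>0$ throughout.

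Second, I would recast Problem~\eqref{prob:mod} as the selection of a sub-collection of the increments. By the telescoping identity, any allocation $\vecc$ with $\sum_i c_i=M$ contributes exactly the increments $\{\Delta_i(k):1\le k\le c_i\}$, i.e., from each bin $i$ a \emph{prefix} of its increment sequence, for a total of $M$ increments, and the objective equals their sum. Hence the minimum over all feasible allocations is bounded below by the sum of the $M$ smallest values among the whole collection $\{\Delta_i(k):1\le i\le n,\ k\ge 1\}$, simply because any feasible objective is a sum of $M$ of these increments.

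Third, I would show that the greedy rule attains this lower bound. The key point, which follows directly from the monotonicity of each $\Delta_i$, is that after any number of iterations the increments already chosen in bin $i$ are precisely $\Delta_i(1),\dots,\Delta_i(c_i)$, so the smallest not-yet-chosen increment in bin $i$ is exactly $\Delta_i(c_i+1)$. Therefore the quantity $\min_i\Delta_i(c_i+1)$ minimized in each step is the globally smallest unchosen increment. By induction on the iteration count, after $m$ steps the algorithm holds the $m$ smallest increments overall; after $M$ steps it holds the $M$ smallest, meeting the lower bound. As the greedy selection always forms valid prefixes, it is a feasible allocation, hence optimal.

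The only delicate point is ties: when several increments share the minimal value the ``$M$ smallest'' need not be a unique set, so I would state the lower bound in terms of the (well-defined) sum of the $M$ smallest values and verify that a greedy prefix selection still realizes that sum. I expect this to be the main, though minor, obstacle; the heart of the argument is the monotonicity of $\Delta_i$, after which optimality of the greedy rule reduces to the standard fact that greedily choosing the smallest marginal costs minimizes a separable convex objective over the integer simplex.
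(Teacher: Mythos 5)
Your proof is correct, but it takes a genuinely different route from the paper's. Both arguments rest on the same two ingredients: the telescoping decomposition of the objective into increments $\Delta_i(k)$, and their strict monotonicity in $k$ (Lemma~\ref{lem:increasing}, which you obtain equivalently from the convexity of $x\mapsto x\log(x/t_i)$). From there the paper proceeds by an exchange argument: its Lemma~\ref{lem:bounded} shows that after every iteration the greedy's partial allocation is coordinatewise dominated by \emph{some} optimal allocation (moving one unit from a bin $\ell$ to the greedy's chosen bin $j$ cannot increase the objective, with a separate case to handle ties), and at termination the equal sums $\sum_i c_i=\sum_i\tilde c_i=M$ force equality. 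You instead argue by matching a lower bound: every feasible allocation costs the sum of $M$ increments drawn as prefixes from the bins, hence at least the sum of the $M$ smallest increments in the whole collection; monotonicity guarantees that the smallest unchosen increment in bin $i$ is always $\Delta_i(c_i+1)$, so the greedy collects exactly the $M$ smallest ones while automatically remaining prefix-feasible. Your version is the standard marginal-allocation argument for minimizing a separable convex function over the integer simplex; it is arguably more transparent and replaces the paper's two-case tie analysis with a single observation (the sum of the $M$ smallest values is well defined, and the strengthened induction hypothesis that every chosen increment is no larger than every unchosen one is preserved). The paper's domination invariant, on the other hand, does not require framing the problem as selecting the globally smallest elements and so adapts more readily to settings where such a clean lower bound is unavailable. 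Your treatment of bins with $t_i=0$ (where $\Delta_i(1)=+\infty$, so they are never selected by the algorithm and never used by any finite-cost allocation) is also sound and mirrors the role of \eqref{eq:probZero} in the paper.
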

The proof is given in the next subsection.
\subsection{Proof of Proposition~\ref{prop:opt}}
We need the following two lemmas.
\begin{lemma}\label{lem:increasing}
For each $i$, if $k>\ell$ then $\Delta_i(k)>\Delta_i(\ell)$, i.e., the increment functions are strictly monotonically increasing.
\end{lemma}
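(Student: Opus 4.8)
The plan is to reduce the claim to the (discrete) convexity of the map $x\mapsto x\log x$. First I would rewrite the increment in a form that makes the dependence on $t_i$ transparent. Expanding the logarithms in the definition of $\Delta_i(k)$,
\begin{align}
\Delta_i(k)=k\log k-(k-1)\log(k-1)-\log t_i,
\end{align}
so the target probability $t_i$ enters only through the additive constant $-\log t_i$, which is independent of $k$. Consequently the monotonicity of $k\mapsto\Delta_i(k)$ is equivalent to the monotonicity of $h(k):=k\log k-(k-1)\log(k-1)$, and it suffices to treat this single function, simultaneously for all $i$.

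Next I would recognize $h$ as the forward difference of $\psi(x):=x\log x$, that is $h(k)=\psi(k)-\psi(k-1)$, using the convention $0\log 0=0$ so that $\psi(0)=0$. On $(0,\infty)$ one has $\psi''(x)=1/(x\ln b)>0$ for a logarithm to any base $b>1$, so $\psi$ is strictly convex. The key step is then the standard increasing-secant-slope characterization of strict convexity, applied to the equally spaced points $k-1<k<k+1$:
\begin{align}
\psi(k)-\psi(k-1)<\psi(k+1)-\psi(k),
\end{align}
which is exactly $h(k)<h(k+1)$, hence $\Delta_i(k)<\Delta_i(k+1)$. Chaining this one-step inequality gives $\Delta_i(\ell)<\Delta_i(\ell+1)<\cdots<\Delta_i(k)$ whenever $k>\ell$, which is the asserted strict monotonicity.

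The only point requiring care -- and the step I expect to be the main obstacle -- is the boundary case $k=1$, where $\Delta_i(1)$ and the corresponding second difference $\psi(2)-2\psi(1)+\psi(0)$ involve $\psi(0)=0\log 0$. I would fix the convention $0\log 0=0$ and dispose of this single triple by direct computation: $\psi(2)-2\psi(1)+\psi(0)=2\log 2>0$, which already yields $\Delta_i(1)<\Delta_i(2)$. (Equivalently, $\psi$ is continuous at $0$, so its strict convexity extends to the closed interval $[0,\infty)$ and covers the triple $(0,1,2)$.) With this base case settled, the secant-slope inequality holds for every triple $(m-1,m,m+1)$ with $m\geq 1$, and the chaining argument then applies uniformly for all $k>\ell\geq 1$, completing the proof.
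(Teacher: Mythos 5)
Your proof is correct, and it takes a slightly different route from the paper's. The paper's proof is a one-liner: it interprets $\Delta_i$ as a function of a real variable $x>1$ and computes $\frac{\partial}{\partial x}\Delta_i(x)=\log\frac{x}{x-1}>0$. You instead strip off the $t_i$-dependence as the additive constant $-\log t_i$, identify the remainder as the forward difference of $\psi(x)=x\log x$, and invoke strict convexity of $\psi$ via positive second differences. The two arguments rest on the same underlying fact --- the paper's derivative $\log\frac{x}{x-1}=\psi'(x)-\psi'(x-1)$ is positive precisely because $\psi'$ is increasing, i.e., because $\psi$ is strictly convex --- so this is a discrete-versus-continuous restatement of one idea rather than a new one. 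What your version buys is an explicit treatment of the boundary case $k=1$: the paper's derivative formula is singular at $x=1$ and its proof, read literally, only establishes monotonicity on the reals greater than $1$, leaving the comparison $\Delta_i(1)<\Delta_i(2)$ to be covered by the convention $0\log 0=0$ and continuity of $\Delta_i$ at $1^+$, which the paper does not spell out. Your direct computation $\psi(2)-2\psi(1)+\psi(0)=2\log 2>0$ closes that small gap cleanly. The cost is a marginally longer argument; the paper's calculation is shorter and suffices for readers willing to accept the continuity remark implicitly.
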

\begin{proof}
We interpret the increment function $\Delta_i$ as defined on the set of real numbers greater than $1$ and calculate
\begin{align}
\frac{\partial}{\partial x}\Delta_i(x)=\log\frac{x}{x-1}>0.
\end{align}
\end{proof}
\begin{lemma}\label{lem:bounded}
Let $\vecc^*$ be an optimal allocation. Let $\vecc$ be a pre-allocation with $\sum_ic_i<M$ and $c_i\leq c^*_i$ for $i=1,\dotsc,n$. Define
\begin{align}
j=\argmin_i\Delta_i(c_i+1).\label{eq:defj}
\end{align}
Then for some optimal allocation $\tilde{\vecc}$ we have
\begin{align}
c_j+1&\leq \tilde{c}_j\label{eq:lem2:statement}\\
c_i&\leq\tilde{c}_i,\quad i=1,\dotsc,n.
\end{align}
\end{lemma}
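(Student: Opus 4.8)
The plan is to prove the statement by a single exchange (swap) step applied to the optimal allocation $\vecc^*$. Since $\vecc \le \vecc^*$ entrywise while $\sum_i c_i < M = \sum_i c_i^*$, the allocation $\vecc^*$ carries a strict total surplus over $\vecc$, and the goal is to show that the greedy choice $j$ in \eqref{eq:defj} can always be ``charged'' against a coordinate of this surplus without increasing the objective. I would split into two cases according to whether $\vecc^*$ already strictly exceeds $\vecc$ in coordinate $j$.

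If $c_j < c_j^*$, there is nothing to do: taking $\tilde{\vecc}=\vecc^*$ gives $\tilde{c}_j=c_j^*\ge c_j+1$ and $\tilde{c}_i=c_i^*\ge c_i$ for all $i$, and $\vecc^*$ is optimal by hypothesis. The substance is the case $c_j=c_j^*$. Here the total surplus $\sum_i c_i^*-\sum_i c_i>0$ cannot sit on coordinate $j$, so I can pick some $\ell\ne j$ with $c_\ell<c_\ell^*$, equivalently $c_\ell^*\ge c_\ell+1\ge 1$.

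I would then form $\tilde{\vecc}$ from $\vecc^*$ by moving one unit from $\ell$ to $j$: set $\tilde{c}_j=c_j^*+1$, $\tilde{c}_\ell=c_\ell^*-1$, and $\tilde{c}_i=c_i^*$ otherwise. This is a valid allocation, since $\sum_i\tilde{c}_i=M$, $\tilde{c}_\ell\ge 0$ because $c_\ell^*\ge 1$, and $\tilde{c}_j\le M$ because $c_j=c_j^*\le\sum_i c_i<M$. Writing the objective as $\sum_i\sum_{k=1}^{c_i}\Delta_i(k)$, the swap changes it by exactly $\Delta_j(c_j^*+1)-\Delta_\ell(c_\ell^*)$, since incrementing $j$ adds $\Delta_j(c_j^*+1)$ and decrementing $\ell$ removes $\Delta_\ell(c_\ell^*)$. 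The decisive estimate is then
\begin{align}
\Delta_j(c_j+1)\le\Delta_\ell(c_\ell+1)\le\Delta_\ell(c_\ell^*),
\end{align}
where the first inequality is the minimality of $j$ in \eqref{eq:defj} and the second is Lemma~\ref{lem:increasing} applied with $c_\ell^*\ge c_\ell+1$. Because $c_j=c_j^*$, the left side equals $\Delta_j(c_j^*+1)$, so the objective does not increase; optimality of $\vecc^*$ forces equality, whence $\tilde{\vecc}$ is optimal. Finally $\tilde{\vecc}$ dominates $\vecc$ as required: $\tilde{c}_j=c_j+1$, $\tilde{c}_\ell=c_\ell^*-1\ge c_\ell$, and $\tilde{c}_i=c_i^*\ge c_i$ elsewhere.

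The main obstacle is precisely the case $c_j=c_j^*$, where $\vecc^*$ itself cannot be the witness and a genuine reallocation is required; the delicate point is verifying that the swap does not increase the objective, which is exactly where the greedy minimality of $j$ and the strict monotonicity of $\Delta_\ell$ from Lemma~\ref{lem:increasing} combine. A minor technicality worth noting is that any coordinate with $t_i=0$ has $c_i^*=0$ by \eqref{eq:probZero} and carries an infinite increment, so both the surplus coordinate $\ell$ and the chosen index $j$ necessarily lie in the support of $\vect$, ensuring all increments in the chain above are finite.
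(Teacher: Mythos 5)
Your proof is correct and follows essentially the same exchange argument as the paper: the only substantive case is $c_j=c_j^*$, where you move one unit from a surplus coordinate $\ell$ to $j$ and use the minimality of $j$ together with Lemma~\ref{lem:increasing} to show the objective cannot increase, so optimality forces equality and the swapped allocation is the required witness. (One cosmetic point: your closing remark about coordinates with $t_i=0$ should appeal to the objective being $+\infty$ when $c_i>0$ and $t_i=0$, not to \eqref{eq:probZero}, which concerns the quantization rule; this side remark is not needed for the argument.)
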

\begin{proof}
Suppose we have
\begin{align}
c_j+1>c^*_j.\label{eq:lem2:supp}
\end{align}
Since $c_j\leq c^*_j$ by assumption, \eqref{eq:lem2:supp} implies
\begin{align}
c_j+1=c^*_j+1.\label{eq:lem2:prop1}
\end{align}
Since $\sum_ic_i<M$ and $\sum_ic^*_i=M$, there must be at least one $\ell\neq j$ with
\begin{align}
c^*_\ell\geq c_\ell+1.\label{eq:lem2:prop2}
\end{align}
By decreasing $c^*_\ell$ by one and increasing $c^*_j$ by one, the change of the objective function is $\Delta_j(c^*_j+1)-\Delta_\ell(c^*_\ell)$. We bound this change as follows.
\begin{align}
\Delta_j(c^*_j+1)-\Delta_\ell(c^*_\ell)&\overset{\text{(a)}}{\leq} \Delta_j(c^*_j+1)-\Delta_\ell(c_\ell+1)\label{eq:lem1:neq1}\\
&\overset{\text{(b)}}{=} \Delta_j(c_j+1)-\Delta_\ell(c_\ell+1)\\
&\overset{\text{(c)}}{\leq} \label{eq:lem1:neq2}0
\end{align}
where (a) follows by \eqref{eq:lem2:prop2} and Lemma~\ref{lem:increasing}, (b) follows by \eqref{eq:lem2:prop1}, and (c) follows by the definition of $j$ in \eqref{eq:defj}.
We have to consider two cases. First, suppose we have strict inequality in either \eqref{eq:lem1:neq1} or \eqref{eq:lem1:neq2}. Then the objective function is decreased, which contradicts the assumption that $\vecc^*$ is optimal. Thus, the supposition \eqref{eq:lem2:supp} is false and the statements of the lemma hold for $\tilde{\vecc}=\vecc^*$. Second, suppose we have equality both in \eqref{eq:lem1:neq1} and \eqref{eq:lem1:neq2}. In this case, define the allocation
\begin{align}
\tilde{c}_\ell = c^*_\ell-1,\quad\, \tilde{c}_j = c^*_j + 1,\quad\, \tilde{c}_i = c^*_i\text{ for } i\neq j,\ell.
\end{align}
Equality in \eqref{eq:lem1:neq1}--\eqref{eq:lem1:neq2} implies optimality of $\tilde{\vecc}$. By \eqref{eq:lem2:prop1} and \eqref{eq:lem2:prop2}, we can verify that $\tilde{\vecc}$ fulfills the statements of the lemma. This concludes the proof.
\end{proof}
By Lemma~\ref{lem:bounded}, there is an optimal allocation $\tilde{\vecc}$ such that in each iteration of Algorithm~\ref{alg:opt} we have 
\begin{align}
c_i\leq \tilde{c}_i,\qquad i=1,\dotsc,n.\label{eq:proof1}
\end{align}
After termination of Algorithm~\ref{alg:opt}, we have
\begin{align}
M=\sum_i c_i \leq \sum_i \tilde{c}_i = M.\label{eq:proof2}
\end{align}
Statements \eqref{eq:proof1} and \eqref{eq:proof2} can only be true simultaneously if $c_i= \tilde{c}_i$ for all $i=1,\dotsc,n$. Consequently, the constructed allocation $\vecc$ is optimal.	 This concludes the proof of Prop.~\ref{prop:opt}.
\subsection{Complexity}
Algorithm~\ref{alg:opt} must find the minimum of a vector with $n$ elements in each iteration, which is of complexity $\mathcal{O}(n)$. The algorithm terminates after $M$ iterations, so the overall complexity is $\mathcal{O}(nM)$. The complexity could be further reduced to $\mathcal{O}(M\log n)$ by keeping the list of increments $\Delta_i(c_i+1)$ sorted, but the presented algorithm is simple to implement and fast enough for our numerical calculations.

\subsection{Summary}
We summarize the properties found for $M$-type approximations of a target pmf $\vect$ in the following proposition. Note that the result of Prop.~\ref{prop:rate} for $M$-type approximations by quantization carries over to optimal $M$-type approximations.
\begin{proposition}
Let $\vecd_M$ be a pmf that minimizes $\kl(\vecd\Vert\vect)$ over all $M$-type pmfs. Then
\begin{enumerate}
\item $\kl(\vecd_M\Vert\vect)\leq T/M$, where $T>0$ depends on $\vect$ but not on $M$.
\item $\lim_{M\to\infty}\kl(\vecd_M\Vert\vect)\leq \lim_{M\to\infty}T/M=0$.
\item Algorithm~\ref{alg:opt} finds a $\vecd_M$ with a complexity of $\mathcal{O}(Mn)$.
\end{enumerate}
\end{proposition}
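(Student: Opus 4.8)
The plan is to assemble all three claims directly from results already established, treating this proposition as a consolidation rather than a fresh argument. The only genuinely substantive step is the first claim, and here I would exploit the optimality of $\vecd_M$. The key observation is that the pmf obtained by quantizing $\vect$ according to \eqref{eq:quantization} is itself an $M$-type pmf, hence a feasible point of the minimization that defines $\vecd_M$. Since $\vecd_M$ attains the minimum of $\kl(\cdot\Vert\vect)$ over all $M$-type pmfs, its relative entropy cannot exceed that of the quantization pmf, which by Prop.~\ref{prop:rate} is at most $T/M$ with $T=1/\min_{j\colon t_j>0}t_j>0$ depending only on $\vect$. Chaining these inequalities yields $\kl(\vecd_M\Vert\vect)\leq T/M$.

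The remaining two claims are immediate corollaries. For the second, I would combine the first claim with the nonnegativity of relative entropy: since $0\leq\kl(\vecd_M\Vert\vect)\leq T/M$ and $T/M\to 0$ as $M\to\infty$, a squeeze argument forces $\lim_{M\to\infty}\kl(\vecd_M\Vert\vect)=0$. For the third, I would invoke Prop.~\ref{prop:opt}, which guarantees that Algorithm~\ref{alg:opt} returns an optimal allocation $\vecc$, so that $\vecd_M=\vecc\cdot 1/M$ is an optimal $M$-type pmf; the complexity then follows from the analysis already given, namely $M$ iterations each costing $\mathcal{O}(n)$ to locate the smallest increment $\Delta_i(c_i+1)$, for a total of $\mathcal{O}(Mn)$.

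The point requiring care—more an observation than an obstacle—is the transfer of the bound in the first claim: one must note explicitly that the quantization pmf is feasible for the optimization in \eqref{prob:orig}, so that optimality of $\vecd_M$ carries the $T/M$ bound of Prop.~\ref{prop:rate} over to the optimal approximation. Once this is spelled out, no new estimates or constructions are needed, and the three statements follow directly from Prop.~\ref{prop:rate}, Prop.~\ref{prop:opt}, and the preceding complexity discussion.
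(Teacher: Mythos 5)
Your proposal is correct and matches the paper's (implicit) argument exactly: the paper justifies the first claim by noting that the quantization bound of Prop.~\ref{prop:rate} carries over to the optimal $M$-type pmf precisely because the quantized pmf is a feasible competitor, and the other two claims are read off from Prop.~\ref{prop:opt} and the complexity discussion. Your explicit mention of feasibility and the squeeze argument only makes the paper's one-line justification more precise.
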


\section{Overview: Approaching AWGN Capacity}
\label{sec:awgn}
We now consider the problem of approaching AWGN capacity. We briefly review existing results. 

Consider an AWGN channel with noise $N\sim\mathcal{N}(0,1)$. The channel capacity is (see \cite[Sec.~7.4]{Gallager1968})
\begin{align}
\mathsf{C}(\mathsf{snr})\!:=\frac{1}{2}\log(1+\mathsf{snr}).
\end{align}
Suppose we use polar coding with a discrete interface with $2^m$ points \cite[Sec. IV.A]{abbe2011polar}. We model this interface by an auxiliary random vector $\boldsymbol{Z}_m$ with $m$ binary entries $Z_i$ that are independent and uniformly distributed. Consequently, $\boldsymbol{Z}_m$ is uniformly distributed over
\begin{align}
\mathcal{Z}_m = \{\underbrace{0\dotsb 0}_{m\text{ bits}},0\dotsb01,\dotsc,1\dotsb 1\}. 
\end{align}
Consider a discrete set $\mathcal{X}_n$ of $|\mathcal{X}_n|=n$ real valued signal points and a deterministic mapping
\begin{align}
g\colon \mathcal{Z}_m\to\mathcal{X}_n.
\end{align}
The constellation $\mathcal{X}_n$ and the mapping $g$ are subject to the constraint
\begin{align}
\expop[g(Z_m)^2]\leq 1.
\end{align} 
Define the gap to capacity as
\begin{align}
D_m(\mathsf{snr},\mathcal{X}_n,g)\!:=\!\mathsf{C}(\mathsf{snr})\!-\!\mathbb{I}[g(Z_m);g(Z_m)\sqrt{\mathsf{snr}}+N]\label{eq:gap}
\end{align}
where $\mathbb{I}(X;Y)$ is the mutual information between $X$ and $Y$. We would like to know how the gap \eqref{eq:gap} scales with the number $m$ of bits at the uniform interface. Two special cases are of interest: First, when $n=2^m$ and the mapping $g$ is one-to-one. In this case, the signal point pmf is uniform and optimization is only over the signal point \emph{positions} $\mathcal{X}_n$. This approach is called \emph{geometric shaping}. Second, the signal point positions $\mathcal{X}_n$ are restricted to be equidistant with distance $\Delta$. In this case, optimization is over the distance $\Delta$, the number of signal points $n$, and the mapping $g$. This approach is called \emph{probabilistic shaping}. 
\subsection{Previous Result: Geometric Shaping}
Abbe and Barron show in \cite[Sec. IV.C]{abbe2011polar} the existence of a family $\mathcal{X}_n$ such that for $n=2^m$ and $g$ being one-to-one (we indicate this by writing $g_\mathrm{id}$), the gap to capacity scales as
\begin{align}
D_m(\mathsf{snr},\mathcal{X}_{2^m},g_\mathrm{id})\leq \snr \cdot 2^{-m}.\label{eq:geometric}
\end{align}
In other words, there exist signal point constellations $\mathcal{X}_{2^m}$ such that the gap to capacity decreases at least exponentially in the number of bits $m$ at the uniform interface when the mapping $g$ is one-to-one. Note that the constellations $\mathcal{X}_{2^m}$ that achieve this behavior are not equidistant.
\subsection{Previous Result: Probabilistic Shaping}
Abbe and Telatar propose in \cite[Sec. V]{Abbe2010} to use $m+1$ equidistant signal points and binomial coefficients normalized by $2^m$ as a $2^m$-type pmf over these points. They call this scheme the CLT approach. We denote the equidistant signal points by $\mathcal{E}_{m+1}$ and the mapping defined by the binomial coefficients by $g_\mathrm{clt}$. Abbe and Barron show in \cite[Sec. IV.B]{abbe2011polar} that 
\begin{align}
D_m(\mathsf{snr},\mathcal{E}_{m+1},g_\mathrm{clt})\leq B_{\snr} \cdot m^{-1}\label{eq:probabilistic}
\end{align}
for some constant $B_{\snr}>0$ that depends on the $\snr$. The bound \eqref{eq:probabilistic} implies that with the CLT approach the capacity gap decreases at least as $m^{-1}$ in the number of bits at the uniform interface. Comparing \eqref{eq:geometric} and \eqref{eq:probabilistic}, we see that geometric shaping outperforms the CLT approach. This motivates improving the CLT approach.

\section{Improved Non-Uniform Mapping for AWGN}
\label{sec:numerical}
\begin{figure*}[ht!]
\footnotesize{(a) 0dB. The horizontal and vertical axis display signal point position and probability$\times 2^m$, respectively.}\\
\includegraphics[width=\textwidth]{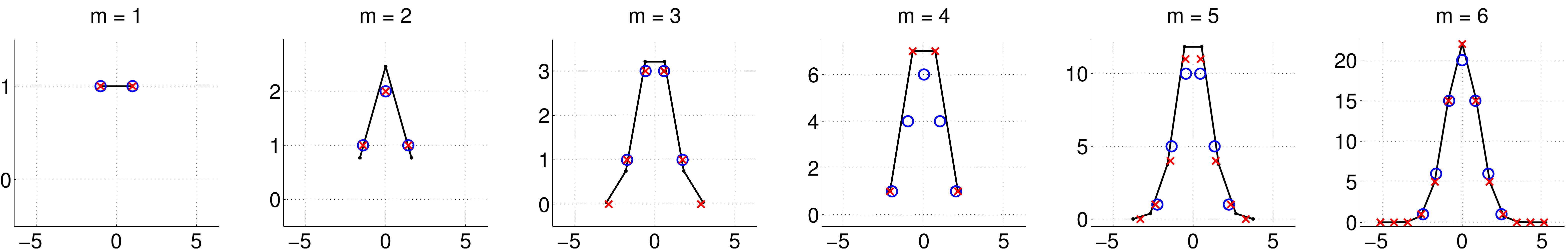}\\[0.2cm]
\footnotesize{(b) 5dB. The horizontal and vertical axis display signal point position and probability$\times 2^m$, respectively.}\\
\includegraphics[width=\textwidth]{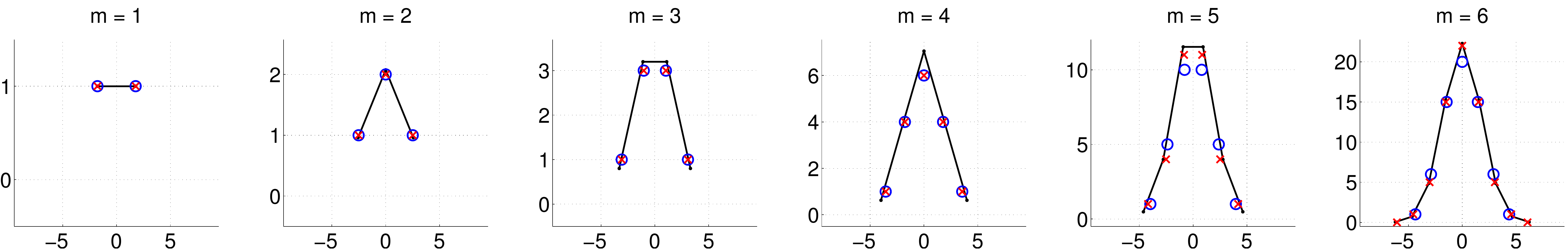}\\[0.4cm]
\def\svgwidth{1.0\columnwidth}
\parbox{\columnwidth}{(c) 0dB\\[-0.5cm]%
\executeiffilenewer{figures/M-gap-0dB.svg}{figures/M-gap-0dB.pdf}%
{inkscape -z -D --file=figures/M-gap-0dB.svg %
--export-pdf=figures/M-gap-0dB.pdf --export-latex}%
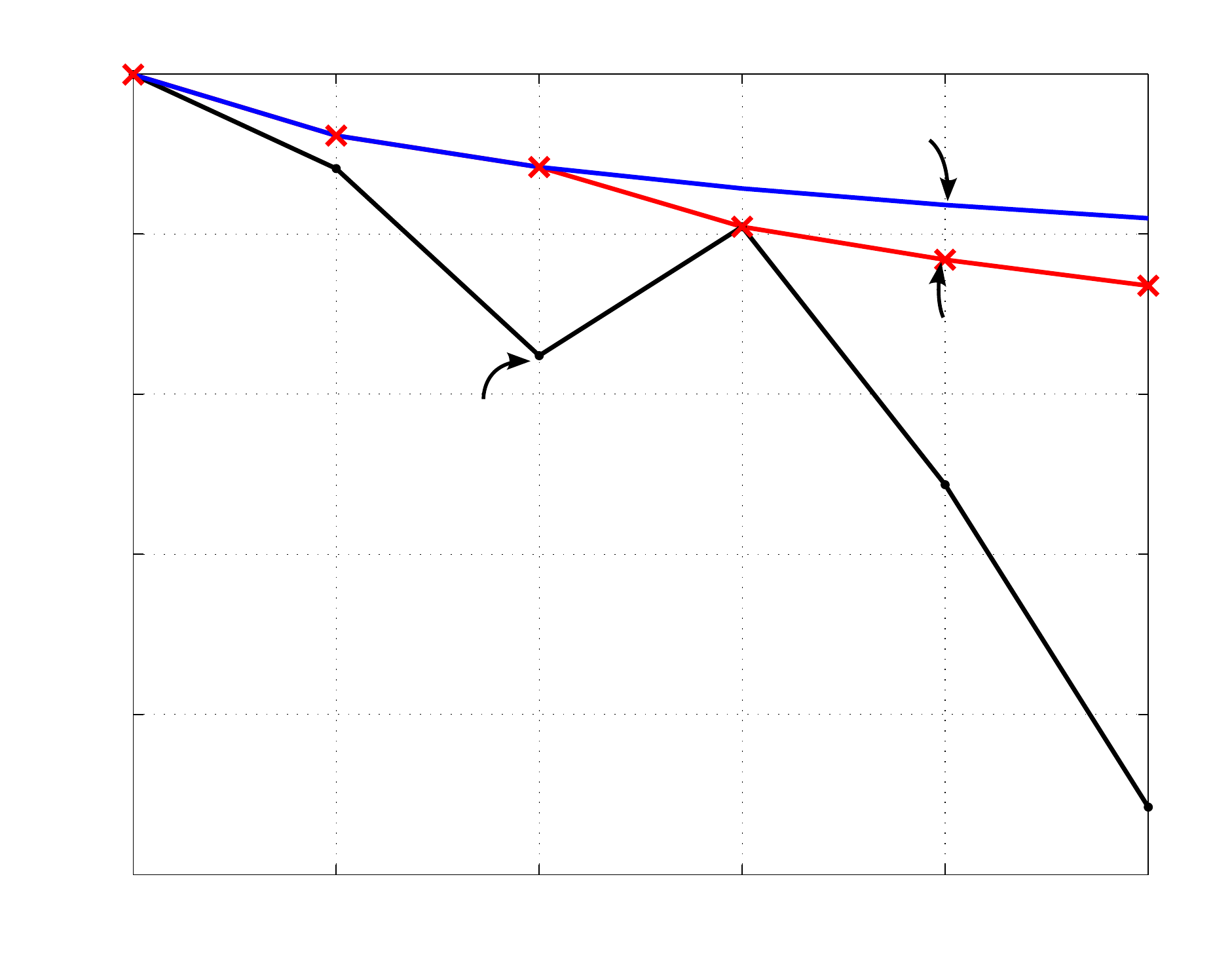%
}
\def\svgwidth{1.0\columnwidth}
\parbox{\columnwidth}{(d) 5dB\\[-0.5cm]%
\executeiffilenewer{figures/M-gap-5dB.svg}{figures/M-gap-5dB.pdf}%
{inkscape -z -D --file=figures/M-gap-5dB.svg %
--export-pdf=figures/M-gap-5dB.pdf --export-latex}%
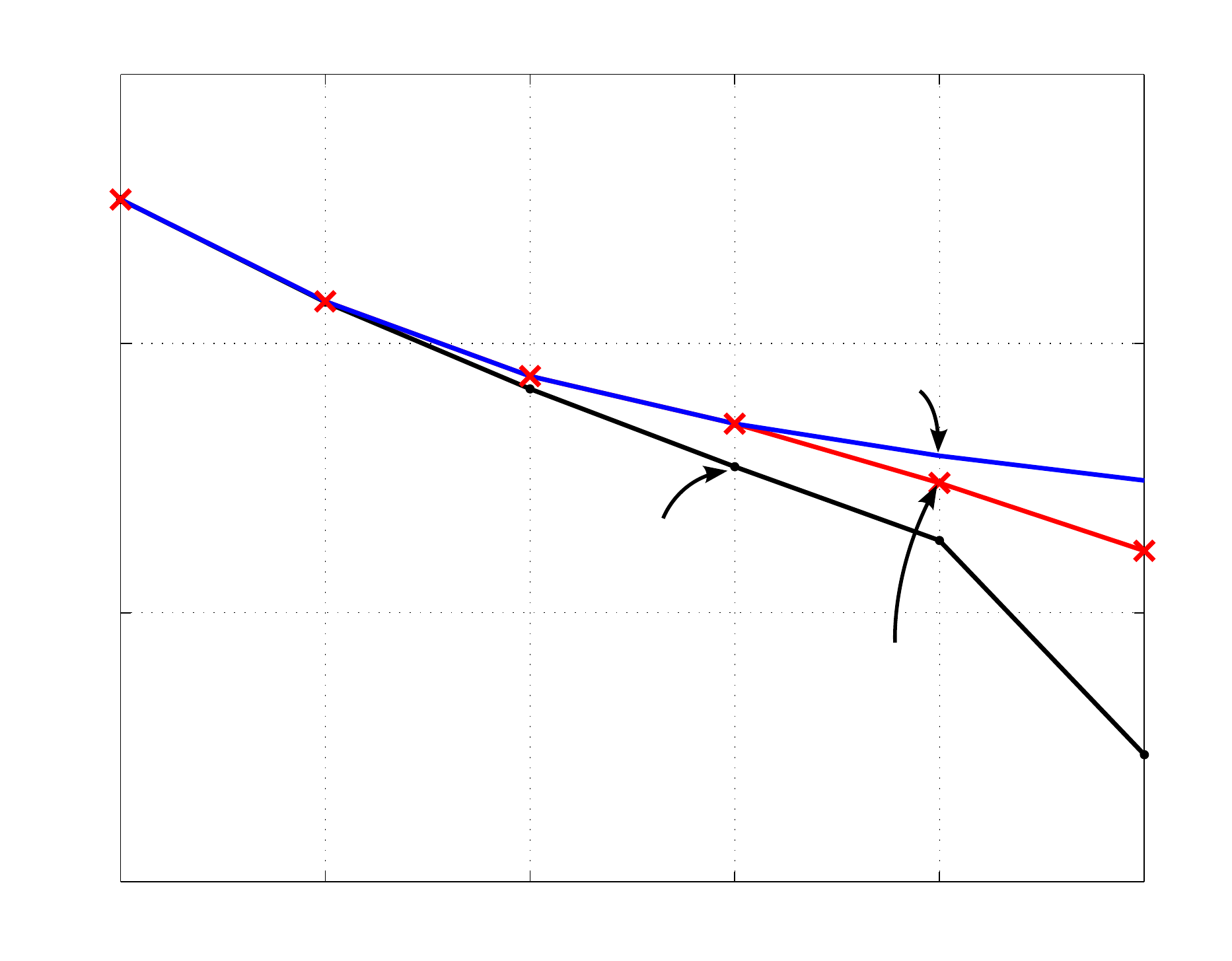%
}
\caption{Comparison of CLT approach with optimal non-uniform mapping as proposed in this work.}
\label{fig:gap}
\end{figure*}
The key observation is as follows. For a given $m$, the CLT approach provides $m+1$ constellation points and a fixed pmf over these points independent of the $\snr$. This approach achieves capacity for any value of the $\snr$ for $m\to\infty$. Intuitively this approach should be sub-optimal in general for finite values of $m$. This can be seen as follows. For a fixed $m$ and high enough $\snr$, we expect among all $2^m$-type pmfs the uniform pmf over $2^m$ points to be optimal. However, the CLT approach limits the number of constellation points to $m+1$. We therefore propose to maximize both over the \emph{cardinality} of the constellation and the \emph{pmf}. Note that there is a tradeoff between the constellation size and the pmf resolution. If we have $n$ constellation points, we have a resolution of $2^m/n$ on average for the probability of each constellation point.
\subsection{Our Approach}
\begin{figure}
\begin{algorithm}\label{alg:awgn}\ 
\\
\textbf{for} $k=2,\dotsc,2^m$\\
\indent 1. $\mathcal{X}^{(k)}:=$ $k$ points: equidistant, normalized, centered.\\
\indent 2. solve
\begin{align*}
\maximize_{\vecp,\Delta}\quad&\miop(X\sqrt{\snr};Y)\\
\st\quad &X\sim\vecp,\,X\in\Delta\mathcal{X}^{(k)},\,\expop(|\Delta X|^2)\leq 1.
\end{align*}
\indent \phantom{3.} Denote optimal pmf by $\vecp^*$.\\
\indent 3. $\vecd^{(k)}:=$ $2^m$-type pmf that minimizes $\kl(\vecd\Vert\vecp^*)$.\\
\textbf{end for}\\
4. Choose $n=\displaystyle\argmin_{k}\miop(\vecd^{(k)})$.\\
\end{algorithm}
\end{figure}
In Alg.~\ref{alg:awgn}, we state our approach as an algorithm. We next give details for each step.

\emph{Step 1.} Self-explanatory.

\emph{Step 2.} We calculate the capacity-achieving pmf of a constellation that consists of $k$ equidistant points. The optimization is both over the distance $\Delta$ of the points and over the input pmf. The optimization over $\Delta$ is done by line search and for each $\Delta$ the optimization over $\vecp$ is a convex optimization problem. We let $\Delta$ take a finite number of equally spaced values, and for each value we solve the convex optimization problem by using CVX \cite{CVX}. We then choose $\vecp^*$ as the optimal pmf for the value of $\Delta$ that results in the greatest mutual information.

\emph{Step 3.} For the optimal pmf $\vecp^*$ that we found in step 2., we use Algorithm~\ref{alg:opt} to find the pmf that minimizes $\kl(\vecd\Vert\vecp^*)$ over all $2^m$-type pmfs $\vecd$. Note that by \cite[Prop. 5.10]{bocherer2012capacity}, \cite[Prop. 3.11]{bocherer2012capacity}, and Pinsker's inequality \cite[Theorem 1.5]{kramer2012multi}, if $\kl(\vecd\Vert\vecp^*)\rightarrow 0$ then the mutual information and the average power achieved by $\vecd$ converge respectively to the mutual information and the average power achieved by $\vecp^*$. To avoid unfair comparisons, we guarantee that the power constraint is fulfilled with equality by rescaling the constellation appropriately, i.e., we calculate the distance $\Delta^{(k)}$ by
\begin{align}
\expop_{\vecd^{(k)}}(|\Delta X|^2)\overset{!}{=}1\quad\Rightarrow\quad\Delta^{(k)}=\frac{1}{\sqrt{\expop_{\vecd^{(k)}}(|X|^2)}}.
\end{align}

\emph{Step 4.} For each constellation size $2,\dotsc,2^m$, the algorithm calculates a $2^m$-type pmf. Choose the one that yields the greatest mutual information.
\subsection{Numerical Results}
We apply Algorithm~\ref{alg:awgn} for signal-to-noise ratios of 0dB and 5dB, i.e., the $\snr$ takes the values $1$ and $\approx 3.16$, respectively. We let $m$ take the values $1,2,3,4,5,6$. Fig.~\ref{fig:gap} (a) and (c) show the results for $0$dB and Fig.~\ref{fig:gap} (b) and (d) show the results for $5$dB. We discuss only the results for $0$dB, the results for $5$dB are similar.

 For each value of $m$, we display in Fig.~\ref{fig:gap} (a) the results for the CLT approach by a blue circle. The horizontal coordinate represents the position of a signal point and the vertical coordinate its probability scaled by the factor $2^m$. The black points connected by a line represent the target pmf $\vecp^{*(n)}$ and the red cross represents its $2^m$-type approximation $\vecd^{(n)}$ as chosen by Algorithm~\ref{alg:awgn} in line 4. As can be seen, for $m=1,2,3$, Algorithm~\ref{alg:awgn} recovers the $2^m$-type pmf obtained via the CLT approach. For $m=4,5,6$, the $2^m$-type pmfs chosen by Algorithm~\ref{alg:awgn} differ from the CLT pmfs.

 It is important to note that Algorithm~\ref{alg:awgn} chooses a different number of signal points than the CLT approach. In Fig.~\ref{fig:gap} (c) the gap to capacity in nats is displayed. The blue line indicates the gap achieved by the CLT approach. The curve appears logarithmic in the logarithmic scale, which is consistent with the behavior $1/m$ as predicted by \eqref{eq:probabilistic}. The black connected points indicate the gap that the target pmfs would achieve. Note that the gap is not monotonically decreasing in $m$. The reason for this is that Algorithm~\ref{alg:awgn} chooses in step 4. the target pmf $\vecp^{*(n)}$ according to the gap that is achieved by its $2^m$-type approximation $\vecd^{(n)}$, and not according to the gap that the target pmf would achieve by itself. 
\newpage

The gap achieved by the $2^m$-type approximation of the target pmfs is displayed by connected red crosses. Note that this gap actually decreases monotonically with $m$. As expected from Fig.~\ref{fig:gap} (a), the gaps achieved by CLT and Algorithm~\ref{alg:awgn} are identical for $m=1,2,3$. For $m=4,5,6$, our approach outperforms the CLT approach. Note that this smaller gap is achieved by using a different number of signal points than the CLT approach suggests. This shows that our idea of optimizing both over the probabilities and the number of signal points is beneficial. 
\subsection{Conclusions}
The numerical results suggest to look beyond the CLT approach and search for new analytical bounds for the gap that can be achieved by probabilistic shaping, i.e., equidistant constellations with non-uniform mappings. It may be possible that the scaling of geometric shaping \eqref{eq:geometric} can also be achieved by probabilistic shaping. This would be an interesting property, since geometrically shaped constellations need quantizers at the receiver of much higher precision than  equidistant constellations do. This makes the probabilistic shaping approach attractive for practical systems.



\bibliographystyle{IEEEtran}
\bibliography{IEEEabrv,confs-jrnls,references}

\end{document}